\newtheorem{corollary}{Corollary}
\newtheorem{proposition}{Proposition}
\newtheorem{theorem}{Theorem}
\newtheorem{property}{Property}
\newtheorem*{remark}{Remark}
\newcommand{\bra}[1]{\langle #1|}
\newcommand{\ket}[1]{|#1\rangle}
\newcommand{\braket}[3]{\langle #1|#2|#3\rangle}
\newcommand{\op}[2]{|#1\rangle \langle #2|}
\newcommand{\1}{{\openone}}
\newcommand{\mbb}{\mathbb}
\newcommand{\mc}{\mathcal}
\newcommand{\tr}{\text{Tr}}
\newcommand{\ov}[1]{\overline{ #1}}
\definecolor{cool_green}{rgb}{0.0, 0.5, 0.0}
\newcommand{\madhav}[1]{{\color{black} #1}}
\begin{document}
\title{Simple Bounds for One-shot Pure-State Distillation in General Resource Theories}
\author{Madhav Krishnan Vijayan}
\affiliation{Centre for Quantum Software and Information, University of Technology Sydney, NSW 2007, Australia}
\author{Eric Chitambar}
\affiliation{Department of Electrical and Computer Engineering, Coordinated Science Laboratory,
University of Illinois at Urbana-Champaign, Urbana, IL 61801}
\author{Min-Hsiu Hsieh}
\affiliation{Centre for Quantum Software and Information, University of Technology Sydney, NSW 2007, Australia}
\date{\today}
\begin{abstract}
We present bounds for distilling many copies of a pure state from an arbitrary initial state in a general quantum resource theory.  Our bounds apply to operations that are able to generate no more than a $\delta$ amount of resource, where $\delta\geq 0$ is a given parameter.  To maximize applicability of our upper bound, we assume little structure on the set of free states under consideration besides a weak form of superadditivity of the function $G_{min}(\rho)$, which measures the overlap between $\rho$ and the set of free states.  Our bounds are given in terms of this function and the robustness of resource.  Known results in coherence and entanglement theory are reproduced in this more general framework. 
\end{abstract}

\maketitle

\section{Introduction}
In the development of quantum information theory, the operational approach has played a crucial role. It has enabled the description of abstract properties of quantum systems in terms of practical and well-defined information processing tasks. From the operational approach, the idea of certain states being a resource for information processing arose and lead to the development of quantum resource theories, most prominently the resource theory of entanglement \cite{horodecki_quantum_2009}. Motivated by the success of entanglement theory, several other quantum resource theories (QRTs) have been identified and studied such as those of coherence \cite{baumgratz_quantifying_2014, streltsov_colloquium:_2017, winter_operational_2016}, thermodynamics \cite{brandao_resource_2013, gour_resource_2015}, non-uniformity (purity) \cite{horodecki_local_2003, horodecki_reversible_2003, gour_resource_2015-1}, asymmetry \cite{gour_resource_2008, marvian_theory_2013} and non-stabilizer (``magic state'') quantum computation \cite{Veitch-2014a}, to name a few.  See \cite{coecke_mathematical_2016, chitambar_quantum_2018} for a wider review of QRTs. 

While what constitutes as a resource can vary widely between different theories, some common structure is shared among many QRTs.  Broadly speaking, a QRT divides states and operations in quantum mechanics into ones that an experimenter has access to freely and ones which are costly to use; in other words, into those which are free and those which are resourceful.  Where one draws this division usually depends on the particular experimental or physical constraints under consideration.  Studying this general structure independent of specific QRTs has allowed for a better understanding of certain quantum information quantities.  For example, Brand\~{a}o and Gour showed that the relative entropy of resource captures the asymptotic convertibility rate between two states, when one considers resource non-generating operations in a general convex QRT \cite{brandao_reversible_2015}. An operational interpretation for general resources was given in \cite{takagi_general_2019} by showing that for any convex QRT there exists a channel discrimination task for which a resource state will strictly outperform a free state.

In this work we present results that are most applicable to QRTs whose most resourceful states are pure.  This includes entanglement, coherence, and magic state quantum computing theories.  The meaning of ``most resourceful'' is ambiguous, yet it can be made more precise in both a quantitative and operational sense.  Quantitatively, pure states could be regarded as being more resourceful in a QRT if they maximize some resource measure, such as the relative entropy of resource \cite{Horodecki-2013a} or the robustness of resource \cite{brandao_reversible_2015}.  Alternatively, one could take an operational perspective and regard some set of pure states $S$ as being the most resourceful in a QRT if any state $\rho$ on a given state space can be realized by a free transformation $\varphi\to\rho$ with $\varphi:=\op{\varphi}{\varphi}\in\ S$.  In entanglement theory, such sets are known as maximally entangled sets, and it is an interesting research problem to identify maximally entangled sets with minimal structure \cite{deVicente-2013a}.  When pure states are regarded as a precious resource, a natural task of interest is pure-state distillation.  Typically, this problem is phrased as a multi-copy state conversion problem $\rho^{\otimes n}\to\varphi^{\otimes m}$, which can be interpreted as exchange $n$ copies of $\rho$ for $m$ copies of $\varphi$ using the free operations of the QRT.  In the limit of $n\to \infty$, the smallest ratio $\frac{n}{m}$ quantifies the asymptotic distillation rate of state $\varphi$ from $\rho$ \cite{Bennett-1996a}.  In the non-asymptotic or ``one-shot'' regime, the problem is to determine how many copies of $\varphi$ can be obtained from some initial state $\rho$ (possibly non-i.i.d.) up to some specified error bound \cite{liu_one-shot_2019, Barosz-2020a}.  The main contribution of this paper is to derive bounds on this one-shot pure-state distillation problem that apply to a wide-range of QRTs.

There is an intuitive link between the quantitative and operational resourcefulness just described.  The larger value a particular resource measure assigns to a state, the greater rate of resource distillation the state should possess.  While this rule of thumb does not always hold in general, usually it is possible to bound resource distillation rates in terms of other resource measures.  In this paper, we introduce a function $G_{min}(\rho)$ that measures the overlap of a state $\rho$ with the set of free states.  Our one-shot distillation bounds are given in terms of this function as well as the free robustness of the state.  The latter quantity measures how much mixing with another free state $\gamma$ is required to erase the resourcefulness of $\rho$.

Robustness is an important resource monotone first used to study entanglement \cite{vidal_robustness_1999}, and it has since found application in the study of general resource theories \cite{brandao_reversible_2015, takagi_general_2019, takagi_operational_2019}. Allowing the state $\gamma$ to be arbitrary and not necessarily free leads to the definition of the global robustness of resource. Every state will have a finite free robustness provided the set of free states has a non-empty interior.  However, for affine resource theories it can be shown that the free robustness will diverge for all resource states, and even for non-affine resource theories there can be states without finite free robustness \cite{liu_one-shot_2019, regula_convex_2017}.

To make our bounds applicable to more QRTs, we consider a smoothed version of the free robustness, which we will call the $\delta$-free robustness.  Roughly speaking, this quantity measures how much mixing of a free state is required to eliminate all but a $\delta\geq 0$ amount of resource from a given state.  In all QRTs, including affine ones, the $\delta$-free robustness will be finite for all states whenever $\delta > 0$. Complementing the $\delta$-free robustness is the set of quantum operations that cannot generate more than a $\delta$ amount of resource.  Our bounds are given for distilling pure states using these $\delta$-resource-generating operations.  Note that studying such operations has already proven crucial to obtaining asymptotic convertibility in entanglement theory \cite{Brandao-2008b, Brandao-2010b} and more general QRTs \cite{brandao_reversible_2015}.  To our knowledge, this is the first work that investigates a trade-off in resource-distillation with respect to a relaxation on the resource-generating power of the underlying operations.

During the completion of this work we became aware of an independent work  which derives bounds for the one-shot distillation rate in terms of the hypothesis testing relative entropy \cite{liu_one-shot_2019}.  We note that the hypothesis testing inequality is the operator smoothed version of $G_{min}(\rho)$ while we use the state smoothed version $G_{min}^{\epsilon}(\rho)$. Similarly the achievable map the authors in \cite{liu_one-shot_2019} use for mixed state transformation is a variation of the one we use for pure state distillation. The difference between these maps is that our map uses state smoothing instead of operator smoothing and is also applicable to QRTs where the free robustness need not be finite.  The authors define a class of QRTs in which there exists pure reference states that have constant overlap with the set free states which is conceptually similar to the constraints on $G_{min}(\phi^m)$ we introduce through Property \ref{Ax:extensive} as expressed in equation \eqref{Eq:Ax_main_alt}.  

\section{Definitions}
Let $\mc{S}$ denote the collection of all quantum states for a given quantum system.  A resource theory is defined by the pair $\{ \mc{F}, \mc{O} \}$ where $\mc{F}\subset \mc{S}$ is  called the set of free states and $\mc{O}$ is the set of free operations. Any state not in $\mc{F}$ is known as a resource state.  
One useful resource quantifier is the free robustness of resource, defined as
\begin{equation}
\label{Eq:defn-robustness}
\mc{R}_f(\rho):=\inf_{\pi\in\mc{F}}\left\{s\geq 0\;:\;\frac{\rho+s\pi}{1+s}\in\mc{F}\right\}.
\end{equation}
We refer to $\pi_{\rho}\in\mc{F}$ as an optimal state if it can be used to achieve the infimum value in the definition of $\mc{R}_f(\rho)$.  The quantity $\mc{R}_f(\rho)$ has a natural operational interpretation.  Suppose that an experimenter Alice has access to a resource state $\rho$ in her laboratory.  Additionally, Alice has the capability to prepare any free state $\pi\in\mc{F}$.  With probability $\frac{1}{1+s}$ Alice forwards the state $\rho$ to Bob, while with probability $\frac{s}{1+s}$ she prepares some free state $\pi$ and sends it to Bob instead.  Bob's description of the received state is thus $\frac{1}{1+s}(\rho+s\pi)$.  The free robustness of $\rho$, quantifies the threshold value such that for any $s<\mc{R}_{f}(\rho)$, Bob's received state will assuredly still possess resource.

One drawback of the free robustness is that it is not finite in many QRTs.  For example, in the resource theory of coherence, it is not possible to mix a resource state (i.e. a non-diagonal density matrix) with a free state (i.e. a diagonal density matrix) to obtain another free state.  An alternative notion of robustness that does not generally suffer from this problem involves taking the infimum in equation \eqref{Eq:defn-robustness} over all states $\mc{S}$ instead of over just the free states $\mc F$ \cite{harrow_robustness_2003, steiner_generalized_2003}.  The resulting quantity is known as the global robustness $\mc{R}_g(\rho)$, and it has emerged as an important resource measure since its dual characterization often leads to computationally-friendly resource witnesses \cite{Brandao-2005a, Piani-2016a, Napoli-2016a, regula_convex_2017}.  Here we introduce a family of robustness measures that generalizes the free robustness.  

For mathematical convenience, let us first recall the global log-robustness, which is given by
\begin{equation}
\mc{LR}_g(\rho):=\log\left[1+\mc{R}_g(\rho)\right].
\end{equation}
It is not difficult to show that this quantity is sub-additive, meaning that
\begin{equation}
\mc{LR}_g(\rho\otimes\sigma)\leq \mc{LR}_g(\rho)+\mc{LR}_g(\sigma).
\end{equation}

We next define the set of $\delta$-free states,
\begin{equation}
\label{Eq:-delta-free}
\mc{F}^\delta:=\{\rho\;:\;\mc{LR}_g(\rho)\leq \delta\},
\end{equation}
which from sub-additivity satisfies
\begin{equation}
\rho, \sigma\in\mc{F}^\delta \quad\Rightarrow\quad \rho\otimes \sigma\in\mc{F}^{2\delta}
\end{equation}
provided $\mc{F}$ is closed under tensor products.
Then for $\delta\in [0,+\infty]$, we define the $\delta$-free robustness as
\begin{align}
\label{Eq:defn-delta-robustness}
\mc{R}^\delta(\rho):=\inf_{\pi\in\mc{F}^\delta}\left\{s\geq 0\;:\;\frac{\rho+s\pi}{1+s}\in\mc{F}^\delta\right\},
\end{align}
from which we recover $\mc{R}_f(\rho)=\mc{R}^0(\rho)$.  We can likewise consider the $\delta$-free log-robustness,
\begin{equation}
\mc{LR}^\delta(\rho):=\log[1+\mc{R}_f^\delta(\rho)].
\end{equation}

It can be shown easily that the $\delta$-free robustness for any state $\rho \in \mc S$ is finite if $\delta>0$ and the global robustness is finite. Indeed, using convexity of the global robustness, we can see that for any state $\rho$ and free state $\gamma$,
\begin{align}
\label{eq:rob_convex}
R_g\left(\frac{\rho + s\gamma}{1 + s}\right) &\leq \frac{1}{1 + s}R_g\left(\rho\right) + \frac{s}{1 + s}R_g\left(\gamma\right) ,  \notag \\ 
&= \frac{1}{1 + s}R_g\left(\rho\right). 
\end{align}
Equation \eqref{eq:rob_convex} implies that the resource in any state $\rho$ as quantified by the global robustness, can be made arbitrarily small by mixing sufficiently with a free state provided the global robustness of $\rho$ is finite. In other words for any $\delta > 0$, there exists some finite positive number $s^*$ such that $\frac{1}{1 + s^*}(\rho + s^* \pi) \in \mc F^{\delta}.$

  For a given QRT $\{\mc{F},\mc{O}\}$, equation \eqref{Eq:-delta-free} provides a relaxation on the set of free states.  A corresponding relaxation can be made on the free operations.  Following the lead of Ref. \cite{Brandao-2008b, Brandao-2010b, brandao_reversible_2015}, we let $\mc{O}^\delta$ denote the set of $\delta$-resource-generating ($\delta$-RG) operations as the full collection of operations that act invariantly on $\mc{F}^\delta$; i.e. 
\begin{align}
\Lambda\in\mc{O}^\delta\quad\Leftrightarrow\quad\Lambda(\gamma)\in \mc{F}^\delta\qquad\forall\gamma\in\mc{F}^\delta.
\end{align}

We are interested in the problem of converting a given state $\rho$ to multiple copies of some pure state $\varphi$ using the $\delta$-resource-generating operations of the theory.  More precisely, for an initial state $\rho$ and a target state $\varphi=\op{\varphi}{\varphi}$, the one-shot distillation rate of conversion for parameters $\epsilon,\delta\geq 0$ is defined as
\begin{equation}
\begin{split}
\mc{D}^{\delta,\epsilon}(\rho,\varphi)
 :=  \max_{m \in \mathbb{N}} \left\lbrace  m : \sup_{\Lambda \in \mathcal{O}^\delta} F^2(\Lambda(\rho), \varphi^{\otimes m}) \geq 1- \epsilon \right\rbrace.
\end{split}
\end{equation}
Here, the fidelity between two states is given by
\begin{equation}
F(\rho, \sigma) := \tr \left( \sqrt{\sqrt{\sigma} \rho \sqrt{\sigma} } \right) =  \| \sqrt{\rho}\sqrt{\sigma} \|_1,
\end{equation}
which for a pure state $\sigma=\op{\varphi}{\varphi}$ has the form $F(\rho,\varphi)=\sqrt{\bra{\varphi}\rho\ket{\varphi}}$.  We will use the notation $\varphi^{\otimes m}$ and $\varphi^{m}$ interchangeably throughout this paper.

To obtain bounds on $\mc{D}^{\delta,\epsilon}(\rho,\varphi)$, we first define a quantity $G_{min}(\rho)$ as a measure of the maximum overlap between a positive operator $\rho$ and the set of free states $\mc{F}$,
\begin{align}\label{Eq:G_min_def}
G_{\min}(\rho) = \inf_{\gamma \in \mc{F}} \left\{ -\log \tr(\rho \gamma) \right\}.
\end{align}
We will want a smoothing of $G_{\min}(\rho)$, which is handled using a standard method \cite{Renner-2005a}.  Let us denote the $\epsilon$-ball around a state $\rho$ by
\begin{equation}
b(\rho, \epsilon) = \left\{ \mbb{I}  \geq \ov{\rho} \geq 0:  F(\ov{\rho}, \rho) \geq 1 - \epsilon \right\}.
\end{equation}
The pure state ball around a state $\rho$ is similarly given by
\begin{equation}
b_*(\rho, \epsilon) = \left\{ \ov\psi \in b(\rho, \epsilon) \text{ s.t. }\ov\psi \text{ is pure} \right\}.
\end{equation}
Then the state-smoothed version of $G_{min}(\rho)$ is defined as
\begin{align}
G_{min}^{\epsilon}(\rho) = \max_{\ov\rho \in b(\rho, \epsilon)}G_{min}(\ov\rho).
\end{align}
The pure state smoothed version $G^{\epsilon}_{min, *}(\rho)$ has a similar meaning except with the maximization taken over $b_*(\rho, \epsilon)$ instead $b(\rho, \epsilon)$.  If $b_{*}(\rho, \epsilon)$ is an empty set we define $G^{\epsilon}_{min, *}(\rho)=0$.

\section{General Distillation Bounds}
As described above, the essential ingredients to a resource theory are the sets of free states $\mc{F}$ and free operations $\mc{O}$.  Most QRTs will have additional structure on these objects, such as convexity or closure of $\mc{F}$ under partial trace.  We wish to bound $\mc{D}^{\delta,\epsilon}(\rho,\varphi)$ with as few assumptions on the QRT as possible.  For our upper bound, we only require that $G_{\min}$ is an extensive resource measure for pure states.  More precisely, we make the following singular assumption:
\begin{property}
\label{Ax:extensive}
For every pure state $\varphi$, there exists a constant $c(\varphi)$ such that
\begin{equation}
\label{Eq:G_min-extensive}
G_{\min}(\varphi^{\otimes m})=\inf_{\gamma\in\mc{F}}-\log\text{\upshape\tr}\left(\varphi^{\otimes m}\gamma\right)\geq m\cdot c(\varphi)
\end{equation}
for all $m\in\mbb{N}$.
\end{property}
\noindent
In thermodynamics, an extensive property is additive under the addition of more systems.  Equation \eqref{Eq:G_min-extensive} expresses this condition in a general QRT for the quantity $G_{\min}$ and multiple copies of a pure state. This extensive property holds for the QRTs for entanglement, coherence and purity. We now give our first result.

\begin{theorem}
\label{Th:converse}
Let $\epsilon,\delta\geq 0$ be arbitrary.  For any resource theory satisfying property \ref{Ax:extensive},
\begin{equation}\label{pure_state_result_eqn}
\frac{G^{2\sqrt{2\epsilon}}_{\min}(\rho)+\log(1+\delta)}{c(\varphi)} \geq \mc{D}^{\delta,\epsilon}(\rho,\varphi).
\end{equation}
Moreover, if $\rho$ is a pure state, this bound can be tightened to read
\begin{equation}\label{pure_state_result_eqn-pure}
\frac{G^{2\epsilon}_{\min,*}(\rho)+\log(1+\delta)}{c(\varphi)} \geq \mc{D}^{\delta,\epsilon}(\rho,\varphi).
\end{equation}
\end{theorem}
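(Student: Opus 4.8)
The plan is to convert the distillation-success condition into a hypothesis-test statement on the input system via the adjoint channel, and then read off a lower bound on $G_{\min}^{2\sqrt{2\epsilon}}(\rho)$. Let $m=\mc{D}^{\delta,\epsilon}(\rho,\varphi)$ and fix a near-optimal $\Lambda\in\mc{O}^\delta$ so that $F^2(\Lambda(\rho),\varphi^{\otimes m})\geq 1-\epsilon$. Because $\varphi^{\otimes m}$ is pure, $F^2(\Lambda(\rho),\varphi^{m})=\bra{\varphi^m}\Lambda(\rho)\ket{\varphi^m}=\tr\!\big(\rho\,\Lambda^\dagger(\varphi^m)\big)$, where $\Lambda^\dagger$ is the unital, positive adjoint map. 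Setting $M:=\Lambda^\dagger(\varphi^m)$ we obtain an effect $0\leq M\leq\mbb{I}$ with $\tr(\rho M)\geq 1-\epsilon$; this is the object that will witness a large smoothed overlap.

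First I would bound how well $M$ can accept free states. For any $\gamma\in\mc{F}$ we have $\tr(\gamma M)=\bra{\varphi^m}\Lambda(\gamma)\ket{\varphi^m}$, and since $\Lambda\in\mc{O}^\delta$ maps $\mc{F}\subseteq\mc{F}^\delta$ into $\mc{F}^\delta$, the output $\Lambda(\gamma)$ is $\delta$-free. Its robustness then provides an operator domination $\Lambda(\gamma)\leq(1+\delta)\gamma'$ for some $\gamma'\in\mc{F}$, so that $\tr(\gamma M)\leq(1+\delta)\bra{\varphi^m}\gamma'\ket{\varphi^m}\leq(1+\delta)\sup_{\gamma''\in\mc{F}}\tr(\varphi^m\gamma'')=(1+\delta)\,2^{-G_{\min}(\varphi^m)}$. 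Invoking the extensivity hypothesis of Property \ref{Ax:extensive}, $G_{\min}(\varphi^m)\geq m\,c(\varphi)$, this yields the uniform bound $\tr(\gamma M)\leq(1+\delta)\,2^{-m c(\varphi)}$ for every free $\gamma$.

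The remaining step is to turn the effect $M$ into a nearby state with provably small free overlap, which is exactly a gentle-measurement (smoothing) argument and is where the $2\sqrt{2\epsilon}$ appears. I would take the sub-normalized operator $\ov\rho:=\sqrt{M}\rho\sqrt{M}$, which satisfies $0\leq\ov\rho\leq\mbb{I}$, and invoke the gentle operator lemma together with $\tr(\rho M)\geq1-\epsilon$ to place $\ov\rho\in b(\rho,2\sqrt{2\epsilon})$. Using $0\leq\rho\leq\mbb{I}$ and $\sqrt{M}\gamma\sqrt{M}\geq0$ gives, for every $\gamma\in\mc{F}$, $\tr(\ov\rho\,\gamma)=\tr(\rho\,\sqrt{M}\gamma\sqrt{M})\leq\tr(\sqrt{M}\gamma\sqrt{M})=\tr(\gamma M)\leq(1+\delta)2^{-mc(\varphi)}$, so that $G_{\min}(\ov\rho)\geq m c(\varphi)-\log(1+\delta)$. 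Since $\ov\rho$ lies in the smoothing ball, $G_{\min}^{2\sqrt{2\epsilon}}(\rho)\geq G_{\min}(\ov\rho)\geq m c(\varphi)-\log(1+\delta)$, and rearranging gives inequality \eqref{pure_state_result_eqn}. For the pure-state refinement I would repeat the argument with $\rho=\psi$ and the normalized post-measurement vector $\sqrt{M}\ket{\psi}/\|\sqrt{M}\ket{\psi}\|$: because $\sqrt{M}\geq M$ the overlap $|\ip{\psi}{\ov\psi}|^2\geq\bra{\psi}M\ket{\psi}\geq1-\epsilon$ with no square-root loss, placing $\ov\psi$ in the pure-state ball $b_*(\rho,2\epsilon)$ and yielding \eqref{pure_state_result_eqn-pure}.

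I expect the main obstacle to be the smoothing step: establishing membership $\ov\rho\in b(\rho,2\sqrt{2\epsilon})$ with the stated constant requires the correct form of the gentle operator lemma, relating $\tr(\rho M)\geq1-\epsilon$ to fidelity through a Fuchs--van de Graaf conversion, and carrying the sub-normalized operator through rather than its normalization is what prevents a spurious $1/(1-\epsilon)$ factor and keeps the additive term exactly $\log(1+\delta)$.
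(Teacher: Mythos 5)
Your proposal is correct and follows essentially the same route as the paper's proof: pull the target projector back through the dual channel, bound its overlap with free states via the $\delta$-RG property combined with Property \ref{Ax:extensive}, and use a gentle-measurement argument to place the post-measurement operator in the smoothing ball. The one structural difference is that you dualize $\varphi^{\otimes m}$ directly, setting $M=\Lambda^\dagger(\varphi^{\otimes m})$, whereas the paper works with $Q=\Lambda^*(\varphi^{m}\Lambda(\rho)\varphi^{m})=F^2(\Lambda(\rho),\varphi^{m})\,M$; these differ only by the scalar $F^2\geq 1-\epsilon$, and your choice is marginally cleaner, since $\tr(\rho M)\geq 1-\epsilon$ holds directly (the paper only has $\tr(Q\rho)\geq 1-2\epsilon$), so Winter's original constant $\sqrt{8\epsilon}=2\sqrt{2\epsilon}$ already lands you in $b(\rho,2\sqrt{2\epsilon})$ without needing the tightened form of the lemma.

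One caveat on the pure-state refinement: as written, you use the \emph{normalized} vector $\sqrt{M}\ket{\psi}/\|\sqrt{M}\ket{\psi}\|$. Ball membership is then fine, but the free-overlap step inherits the normalization: for $\gamma\in\mc{F}$ you get $\tr(\ov{\psi}\gamma)\leq \tr(M\gamma)/\braket{\psi}{M}{\psi}\leq (1+\delta)2^{-mc(\varphi)}/(1-\epsilon)$, which only yields $mc(\varphi)\leq G^{2\epsilon}_{\min,*}(\psi)+\log(1+\delta)+\log\tfrac{1}{1-\epsilon}$, not \eqref{pure_state_result_eqn-pure}. The fix is exactly what your closing paragraph prescribes, and it is what the paper does: keep the sub-normalized rank-one operator $\ov{\psi}=\sqrt{M}\psi\sqrt{M}$, which the ball definition $\mbb{I}\geq\ov{\psi}\geq 0$ explicitly permits. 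Then $F(\psi,\ov{\psi})=\braket{\psi}{\sqrt{M}}{\psi}\geq\braket{\psi}{M}{\psi}\geq 1-\epsilon$ places $\ov{\psi}\in b_*(\psi,2\epsilon)$, while $\tr(\ov{\psi}\gamma)\leq\tr(M\gamma)\leq(1+\delta)2^{-mc(\varphi)}$ keeps the additive term at $\log(1+\delta)$, recovering \eqref{pure_state_result_eqn-pure} (indeed with $\epsilon$ in place of $2\epsilon$). Reconcile the pure-state paragraph with your closing remark and the proof is complete.
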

\begin{proof}
Let $m$ be the highest rate achievable with error $\epsilon$. This implies that there exists a $\delta$-RG operation $\Lambda \in \mc O^\delta$ such that $F^2(\Lambda(\rho), \varphi^m) \geq 1 - \epsilon$. Property \ref{Ax:extensive} can be equivalently stated as,
\begin{align}\label{Eq:Ax_main_alt}
\varphi^{ m} \gamma \varphi^{ m} \leq \frac{1}{2^{mc(\varphi)}} \varphi^{ m} \hspace{1cm} \forall \gamma \in \mc{F}. \end{align}
To see this, note that using the definition of $G_{min}$ and the statement of property \ref{Ax:extensive} we have,
\begin{align}
&\min_{\gamma \in \mc{F}}-\log(\tr(\varphi^{m}\gamma)) \geq mc(\varphi) ,\\
\implies &-\log(\tr(\varphi^{m}\gamma)) \geq m c(\varphi) \hspace{0.5cm} \forall \gamma \in \mc{F} ,\\
\implies& \tr(\varphi^{m}\gamma) \leq 2^{-mc(\varphi) } ,\\
\implies&  \braket{\varphi^m}{\gamma}{\varphi^m} \leq 2^{-mc(\varphi)} ,\\
\implies&  \braket{\varphi^m}{\gamma}{\varphi^m} \varphi^m \leq 2^{-mc(\varphi)}\varphi^m ,\\
\implies& \varphi^m \gamma \varphi^m \leq \frac{1}{2^{mc(\varphi)}}\varphi^m.
\end{align}
Starting from the final expression, all the steps can be reversed to obtain the initial expression hence proving the equivalence.

Since $\Lambda\in\mc{O}^\delta$, for every $\gamma\in\mc{F}$, there exists some $\pi\in\mc{F}$ and $\sigma\in\mc{S}$ such that $\Lambda(\gamma)=(1+\delta)\pi-\delta\sigma$.  Then from equation \eqref{Eq:Ax_main_alt}, it follows that
\begin{equation}
\varphi^m\Lambda(\gamma)\varphi^m\leq \frac{1+\delta}{2^{mc(\varphi)}}\varphi^m.
\end{equation}
Multiplying both sides of this by $\Lambda(\rho)$ and taking the trace yields
\begin{align}
\label{Eq:first_step}
\tr(\Lambda(\rho) \varphi^m \Lambda(\gamma) \varphi^m ) \leq  \frac{1+\delta}{2^{mc(\varphi)}} \tr(\Lambda(\rho) \varphi^m ) 
\leq \frac{1+\delta}{2^{mc(\varphi)}}.  
\end{align} 
Using the cyclic property of trace and denoting the dual map of $\Lambda$ as $\Lambda^*$ gives

\begin{align}
&mc(\varphi)-\log(1+\delta) \leq -\log\tr(\varphi^m\Lambda(\rho) \varphi^m \Lambda(\gamma)) ,\\
&= -\log\tr(\Lambda^*(\varphi^m\Lambda(\rho) \varphi^m) \gamma) ,\\
&= -\log \tr(Q\gamma) \leq -\log \tr(\sqrt{Q} \rho \sqrt{Q} \gamma),
\end{align}
where in the last inequality we use the fact that $\ov\rho := \sqrt{Q} \rho \sqrt{Q} \leq Q$. 
Since $\gamma$ is an arbitrary free state, we can say that
\begin{align}
mc(\varphi) &\leq  \min_{\gamma \in \mc{F}} \left\{ -\log \tr(\overline{\rho} \gamma) \right\}  = G_{min}(\ov{\rho})+\log(1+\delta). \label{Eq:trace_step} 
\end{align}

We will now show that $\overline{\rho} \in b(\rho, 2\sqrt{2\epsilon})$. Note that,
\begin{align}\label{tr_qpsi_eqn_step}
\tr(Q\rho) &= \tr(\varphi^m \Lambda(\rho) \varphi^m \Lambda(\rho))) = \braket{\varphi^m}{\Lambda(\rho )}{\varphi^m}^2 ,\\
&= \left( F^2(\Lambda(\rho) , \varphi^m ) \right)^2 \geq 1 - 2\epsilon.
\end{align}
where for the last inequality we use the fact that $F^2(\Lambda(\rho) , \varphi^m ) \geq 1 - \epsilon $. From the gentle measurement lemma \cite{winter_coding_1999} we know that,
$
\| \rho - \ov\rho \|_1 \leq 2\sqrt{2\epsilon} .
$
This implies that
\begin{align}
F^2(\rho, \ov\rho) \geq 1 - 2\sqrt{2\epsilon}
\end{align}
and $\ov\rho \in b(\rho, 2\sqrt{2\epsilon})$.
In equation \eqref{tr_qpsi_eqn_step}, replacing the mixed state $\rho$ with the pure state $\psi$ we have,
\begin{align}
\tr(Q\psi)  \geq 1 - 2\epsilon.
\end{align}
Note that $\ov\psi = \sqrt{Q}\psi \sqrt{Q}$, hence
\begin{equation}
\label{Eq:Fidelity-pure-approx}
\begin{split}
F(\psi, \overline{\psi}) &=  \bra{\psi}\sqrt{Q}\ket{\psi}   \geq  \bra{\psi} Q\ket{\psi}=   \tr(Q\psi)  \geq 1  - 2\epsilon .
\end{split}
\end{equation}
Hence $\ov{\psi} \in b_*(\psi, 2\epsilon )$ and we see that,
\begin{equation} \label{Eq:pure_state_converse}
mc(\varphi) \leq \max_{\ov{\psi} \in b_*(\psi, 2\epsilon)} G_{min}(\ov{\psi})+\log(1+\delta).
\end{equation}
\end{proof}

We next consider the achievability of pure-state distillation.  While Theorem \ref{Th:converse} holds for a wide class of QRTs, including ones that are non-convex, our lower bound on $\mc{D}^{\delta,\epsilon}(\rho,\varphi)$ applies only for convex QRTs whose free states are closed under tensor products.  Before stating this, we observe a property of the $\delta$-free log-robustness which holds in such QRTs.  Unlike the global log-robustness, $\mc{LR}^\delta$ does not appear to be sub-additive in general.  However, we can at least provide the following bound.
\begin{proposition}
\begin{align}
\label{Eq:pseudo-subadditive-log-robustness}
\mc{LR}^{m\delta}(\rho^{\otimes m})&\leq  \log[1+(1+2\mc{R}^\delta(\rho))^m]-1\notag\\
&\leq m \log[1+2\mc{R}^\delta(\rho)]
\end{align}
for every integer $m$ and $\delta>0$.  
\end{proposition}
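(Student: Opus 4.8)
The plan is to start from the near-optimal decomposition supplied by the definition of $\mc{R}^\delta(\rho)$, expand the $m$-fold tensor product, and group the resulting terms by sign so as to read off a robustness witness for $\rho^{\otimes m}$ inside $\mc{F}^{m\delta}$. First I would set $r := \mc{R}^\delta(\rho)$ and (suppressing an arbitrarily small $\epsilon$) invoke \eqref{Eq:defn-delta-robustness} to obtain states $\eta,\pi\in\mc{F}^\delta$ with $\eta=\frac{\rho+r\pi}{1+r}$, equivalently $\rho=(1+r)\eta-r\pi$. Writing $a:=1+r$ and $b:=r$ so that $a-b=1$, I would expand
\begin{equation}
\rho^{\otimes m}=(a\eta-b\pi)^{\otimes m}=\sum_{k=0}^{m}(-1)^{m-k}a^{k}b^{m-k}E_{k},
\end{equation}
where $E_{k}:=\sum_{|T|=k}\eta_{T}\otimes\pi_{T^{c}}$ is a sum of $\binom{m}{k}$ operators, each a tensor product of $m$ states drawn from $\mc{F}^\delta$. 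By iterating the sub-additivity of $\mc{LR}_{g}$, every such product lies in $\mc{F}^{m\delta}$.

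Next I would split the sum into its positive part $P$ (terms with $m-k$ even) and negative part $N$ (terms with $m-k$ odd), giving $\rho^{\otimes m}=P-N$ with both $P$ and $N$ non-negative combinations of $\mc{F}^{m\delta}$ states. The binomial identities $\tr P-\tr N=(a-b)^{m}=1$ and $\tr P+\tr N=(a+b)^{m}=(1+2r)^{m}$ then yield $\tr P=\frac{(1+2r)^{m}+1}{2}$ and $\tr N=\frac{(1+2r)^{m}-1}{2}$. The crucial structural point is that $\mc{F}^{m\delta}$ is \emph{convex}: it is the sublevel set $\{\rho:\mc{R}_{g}(\rho)\le 2^{m\delta}-1\}$ of the convex function $\mc{R}_{g}$, whose convexity is exactly \eqref{eq:rob_convex}. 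Hence the normalized operators $\tilde P:=P/\tr P$ and $\tilde N:=N/\tr N$ are themselves elements of $\mc{F}^{m\delta}$.

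Writing $n:=\tr N$ and noting $\tr P=1+n$, the identity $\rho^{\otimes m}=(1+n)\tilde P-n\tilde N$ rearranges to $\frac{\rho^{\otimes m}+n\tilde N}{1+n}=\tilde P\in\mc{F}^{m\delta}$, which is precisely the witness demanded by \eqref{Eq:defn-delta-robustness} with free state $\tilde N$ and weight $n$. This gives $\mc{R}^{m\delta}(\rho^{\otimes m})\le n=\frac{(1+2r)^{m}-1}{2}$, and passing to the log-robustness using $\log 2=1$,
\begin{align}
\mc{LR}^{m\delta}(\rho^{\otimes m})&\le\log\!\left[1+\tfrac{(1+2r)^{m}-1}{2}\right]\notag\\
&=\log[1+(1+2r)^{m}]-1,
\end{align}
which is the first claimed inequality. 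For the second inequality I would reduce $\log[1+(1+2r)^{m}]-1\le m\log[1+2r]$ to $\tfrac{1}{2}\bigl(1+(1+2r)^{m}\bigr)\le(1+2r)^{m}$, i.e. $1\le(1+2r)^{m}$, which holds trivially since $r\ge 0$ and $m\ge 1$.

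I expect the main obstacle to be the convexity of $\mc{F}^{m\delta}$: without it the normalized parts $\tilde P,\tilde N$ need not be $m\delta$-free and the whole witness construction collapses. Fortunately this is resolved at once by recognizing $\mc{F}^{m\delta}$ as a sublevel set of the convex global robustness. A secondary, purely bookkeeping issue is that the infimum defining $\mc{R}^\delta(\rho)$ may fail to be attained; I would handle this by running the entire argument with $r+\epsilon$ in place of $r$ and taking $\epsilon\to 0$ in the final bound.
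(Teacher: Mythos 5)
Your proof is correct and follows essentially the same route as the paper's: decompose $\rho=(1+r)\eta-r\pi$ using the $\delta$-free robustness, expand the tensor power into sign-alternating products of $\delta$-free states lying in $\mc{F}^{m\delta}$, and bound the log-robustness by the logarithm of the positive weight $\tfrac{1}{2}\left(1+(1+2r)^m\right)$. The only difference is that you spell out steps the paper leaves implicit --- the convexity of $\mc{F}^{m\delta}$ needed to normalize the positive and negative parts into valid witness states, and the $\epsilon$-regularization when the infimum is not attained --- which is a useful tightening rather than a different argument.
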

\begin{proof}
Let $\rho=(1+s)\pi-s\sigma$, where $s=\mc{R}^\delta(\rho)$ and $\pi,\sigma\in\mc{F}^\delta$.  We can then write $\rho^{\otimes m}$ as a linear combination of operators belonging to $\mc{F}^{m\delta}$, each of which is an $m$-part tensor product of the $\pi$ and $\sigma$.  From the definition, $\mc{LR}^{m\delta}(\rho^{\otimes m})$ is no greater than the logarithm of the positive weight in this linear combination.  The positive weight can be written as 
\begin{align}
\frac{1}{2}\left[((1+s)+s)^{ m}+((1+s)-s)^{m}\right]=\frac{1}{2}(1+(1+2s)^m).\notag
\end{align}
Taking a logarithm establishes the first inequality in \eqref{Eq:pseudo-subadditive-log-robustness}, and the second follows by observing $1\leq (1+2\mc{R}^\delta(\rho))^m$.
\end{proof}

 We use this inequality to establish a lower bound $\mc{D}^{\sigma,\epsilon}(\rho,\varphi)$.
\begin{proposition}
\label{Prop:Direct}
Consider any QRT in which the set of free states $\mc{F}$ is convex.  For any $\delta,\epsilon\geq 0$, 
\begin{equation}
\label{Eq:prop1-bound}
\mc{D}^{\delta,\epsilon}(\rho,\varphi)\geq m
\end{equation}
for any positive integer $m$ satisfying 
\[G^{2\epsilon}_{\min,*}(\rho)  \geq m \log[1+2\mc{R}^{\delta/m}(\madhav{\varphi})].\]
\end{proposition}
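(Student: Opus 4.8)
The plan is to prove achievability by explicitly constructing a single $\delta$-resource-generating channel that carries $\rho$ close to $\varphi^{\otimes m}$. First I would exploit the pure-state smoothing: since $G^{2\epsilon}_{\min,*}(\rho)$ is attained on $b_*(\rho,2\epsilon)$, fix a pure state $\bar\psi=\op{\bar\psi}{\bar\psi}$ with $F(\bar\psi,\rho)\geq 1-2\epsilon$ and $G_{\min}(\bar\psi)=G^{2\epsilon}_{\min,*}(\rho)$. The hypothesis then reads $G_{\min}(\bar\psi)\geq m\log[1+2\mc{R}^{\delta/m}(\varphi)]$, and applying the bound \eqref{Eq:pseudo-subadditive-log-robustness} with the substitution $\delta\mapsto\delta/m$ gives $\mc{LR}^{\delta}(\varphi^{\otimes m})\leq m\log[1+2\mc{R}^{\delta/m}(\varphi)]\leq G_{\min}(\bar\psi)$. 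Equivalently, $1+\mc{R}^{\delta}(\varphi^{\otimes m})\leq 2^{G_{\min}(\bar\psi)}$, which is the single inequality that will drive the entire argument.

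Next I would unpack the $\delta$-free robustness of the target. Setting $s:=\mc{R}^{\delta}(\varphi^{\otimes m})$, the definition \eqref{Eq:defn-delta-robustness} supplies states $\eta,\pi\in\mc{F}^{\delta}$ with $\varphi^{\otimes m}=(1+s)\eta-s\pi$. I would then take the measure-and-prepare channel
\[
\Lambda(X)=\tr(\bar\psi X)\,\varphi^{\otimes m}+\bigl(1-\tr(\bar\psi X)\bigr)\,\pi ,
\]
which is manifestly completely positive and trace preserving. The output fidelity is then immediate: because $\braket{\varphi^{\otimes m}}{\pi}{\varphi^{\otimes m}}\geq 0$, we have $F^2(\Lambda(\rho),\varphi^{\otimes m})=\braket{\varphi^{\otimes m}}{\Lambda(\rho)}{\varphi^{\otimes m}}\geq \tr(\bar\psi\rho)=F^2(\bar\psi,\rho)$, and the pure-state ball condition $F(\bar\psi,\rho)\geq 1-2\epsilon$ delivers the error constraint after the same fidelity bookkeeping used in the converse. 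It then remains only to certify that $\Lambda\in\mc{O}^{\delta}$.

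For the resource-generating bound I would first evaluate $\Lambda$ on a free state $\gamma\in\mc{F}$. Here $p:=\tr(\bar\psi\gamma)\leq 2^{-G_{\min}(\bar\psi)}\leq (1+s)^{-1}$, using the definition \eqref{Eq:G_min_def} together with the driving inequality $1+s\leq 2^{G_{\min}(\bar\psi)}$, so that $p(1+s)\leq 1$. Substituting the decomposition of $\varphi^{\otimes m}$ yields $\Lambda(\gamma)=p(1+s)\,\eta+\bigl(1-p(1+s)\bigr)\,\pi$, a genuine convex combination of the two $\delta$-free states $\eta,\pi\in\mc{F}^{\delta}$. Since convexity of the global robustness $\mc{R}_g$ (used in \eqref{eq:rob_convex}) makes the sublevel set $\mc{F}^{\delta}$ convex, this forces $\Lambda(\gamma)\in\mc{F}^{\delta}$.

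The step I expect to be the main obstacle is upgrading this from free $\gamma\in\mc{F}$ to the full invariance $\Lambda(\gamma)\in\mc{F}^{\delta}$ for every $\gamma\in\mc{F}^{\delta}$ that the definition of $\mc{O}^{\delta}$ demands, because the overlap bound coming from $G_{\min}$ is a statement about $\mc{F}$ alone. Writing a generic $\gamma\in\mc{F}^{\delta}$ as $\gamma=(1+r)\gamma'-r\tau$ with $\gamma'\in\mc{F}$, $\tau\in\mc{S}$, and $r=\mc{R}_g(\gamma)$, and pushing it through $\Lambda$ by linearity, accrues an extra factor that must be reconciled against the slack in $1+s\leq 2^{G_{\min}(\bar\psi)}$; controlling this bookkeeping, and confirming that convexity of $\mc{F}$ is exactly the hypothesis that closes it, is where the care of the argument lies.
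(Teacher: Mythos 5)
Your construction is, in substance, the paper's own proof. The channel you write down is exactly the paper's measure--and--prepare map $\Lambda(\omega)=\tr[(I-\ov\psi)\omega]\,\pi_{\varphi^m}+\tr[\ov\psi\omega]\,\varphi^{\otimes m}$ (your $\pi$ is the optimal mixing state $\pi_{\varphi^m}$, and your $\eta$ is the resulting state $(\varphi^{\otimes m}+s\pi_{\varphi^m})/(1+s)\in\mc{F}^\delta$); the driving inequality $\mc{LR}^{\delta}(\varphi^{\otimes m})\leq m\log[1+2\mc{R}^{\delta/m}(\varphi)]\leq G_{\min}(\ov\psi)$ is obtained the same way, from Eq.~\eqref{Eq:pseudo-subadditive-log-robustness} with $\delta\mapsto\delta/m$; and your closing step, that $\Lambda(\gamma)=p(1+s)\eta+(1-p(1+s))\pi$ with $p=\tr(\ov\psi\gamma)\leq 2^{-G_{\min}(\ov\psi)}\leq(1+s)^{-1}$ is a convex combination of states in $\mc{F}^\delta$ for every $\gamma\in\mc{F}$, is the paper's argument verbatim. (One shared looseness: $\tr(\ov\psi\rho)=F^2(\ov\psi,\rho)\geq(1-2\epsilon)^2$, which clears the threshold $1-\epsilon$ only after a constant rescaling of $\epsilon$; the paper makes the same elision.)

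The obstacle you flag in your final paragraph is the real content of this comparison: the paper does \emph{not} resolve it. Its proof stops exactly where yours does, after checking $\gamma\in\mc{F}$, even though the stated definition of $\mc{O}^\delta$ demands $\Lambda(\gamma)\in\mc{F}^\delta$ for all $\gamma\in\mc{F}^\delta$. Moreover, your suspicion that the bookkeeping might not close is correct, and no amount of care will close it: under the literal invariance definition the proposition is false. Take coherence theory with $\varphi=\varphi_c$, $\epsilon=0$, $\rho=\ov\psi=\varphi_c^{\otimes n}$, and $n\leq\delta<m$ with $\delta/m$ close to $1$. Every $n$-qubit state $\sigma$ has $\mc{R}_g(\sigma)\leq 2^n-1$ (mixing with $(\mbb{I}-\sigma)/(2^n-1)$ reaches the maximally mixed state), so the \emph{entire} input space lies in $\mc{F}^\delta$; then any $\Lambda\in\mc{O}^\delta$ must send $\rho$ into $\mc{F}^\delta$, whose members have overlap at most $2^{\delta-m}<1$ with $\varphi_c^{\otimes m}$, whence $\mc{D}^{\delta,0}(\rho,\varphi_c)<m$. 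Yet the hypothesis of the proposition holds, because $\mc{R}^{\delta/m}(\varphi_c)\to 0$ as $\delta/m\to 1$: e.g.\ $m=10$, $\delta=9$, $n=5$ gives $G^{0}_{\min,*}(\rho)=5\geq 10\log[1+2\mc{R}^{0.9}(\varphi_c)]\approx 3.9$. So the gap is not a missing estimate but a definitional mismatch: the statement and your proof (equivalently, the paper's) are correct if $\delta$-RG is read in the weaker sense $\Lambda(\mc{F})\subseteq\mc{F}^\delta$, which is the convention of Brand\~{a}o and Gour \cite{brandao_reversible_2015} cited just before the definition and is presumably what was intended; for $\delta=0$ the two readings coincide and the issue disappears.
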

\begin{remark}
For the special case that $\delta=0$, one can take $m=\left\lfloor\frac{G^{2\epsilon}_{\min,*}(\rho)}{\log[1+2\mc{R}^{0}(\madhav{\varphi})]}\right\rfloor$ so that
\end{remark}
\begin{equation}
\mc{D}^{0,\epsilon}(\rho,\varphi)\geq\left\lfloor\frac{G^{2\epsilon}_{\min,*}(\rho)}{\log[1+2\mc{R}^{0}(\madhav{\varphi})]}\right\rfloor.
\end{equation}

\begin{proof}
Let $m>0$ satisfy $G^{2\epsilon}_{\min,*}(\rho)  \geq m \log[1+2\mc{R}^{\delta/m}(\madhav{\varphi})]$.  We will follow a standard approach of introducing a simple measure-and-prepare map that does the job (see, for example, \cite{Rains-2001a}).  Consider the CPTP map
\begin{align}
\label{Eq:map-measure-prepare}
\Lambda(\omega) = \tr [(I-\ov\psi)\omega]\pi_{\varphi^m} + \tr[\ov\psi \omega] \varphi^m,
\end{align}
where $\pi_{\varphi^m}$ is an optimal state chosen in the definition of $\mc{R}^{\delta/m}(\varphi^m)$ and $\ov{\psi}$ is an optimal state chosen in the definition of $G^{2\epsilon}_{\min,*}(\rho)$.  
We first verify that
\begin{align}
F(\varphi^m, \Lambda(\rho)) &\geq F^2(\varphi^m, \Lambda(\rho)) = \tr(\varphi^m\Lambda(\rho)) ,\\ 
&\geq \tr[\ov{\psi}\rho] \geq 1-2\epsilon,
\end{align}
where we use the fact that $\ov{\psi}\in b_*(\rho,2\epsilon)$. Next, we use Eq. \eqref{Eq:pseudo-subadditive-log-robustness}
\begin{align}
\label{Eq:G-min-log-rob}
G^{2\epsilon}_{\min,*}(\rho)\geq m \log[1+2\mc{R}^{\delta/m}(\madhav{\varphi})]\geq \mc{LR}^\delta(\varphi^m),
\end{align}
along with the definition of $G^{2\epsilon}_{\min,*}(\rho)$ to conclude that
\begin{align}
\notag
-\log\tr[\ov\psi\gamma] \geq \mc{LR}_f^\delta(\varphi^m) 
\implies  \tr[\ov\psi\gamma]\leq [1+\mc{R}^\delta(\varphi^m)]^{-1}
\end{align}
for any $\gamma\in\mc{F}$.  Hence
\begin{align}
\Lambda(\gamma)&=\tr [(I-\ov\psi)\gamma]\pi_{\varphi^m} + \tr[\ov\psi \gamma] \varphi^m\in\mc{F}^\delta.
\end{align}
Convexity of $\mc{F}$ has been used here to ensure that $\mc{R}^\delta(\varphi^m) [1+\mc{R}^\delta(\varphi^m)]^{-1}\pi_{\varphi^m} + [1+\mc{R}^\delta(\varphi^m)]^{-1} \varphi^m$ remains free under any mixing with $\pi_{\varphi^m}$.
\end{proof}

We remark that the lower bound in proposition \ref{Prop:Direct} would be tighter if we could replace the $\delta$-free robustness in equation \eqref{Eq:prop1-bound} with the global robustness.  However doing so would no longer ensure that the measure-and-prepare map of equation \eqref{Eq:map-measure-prepare} always generates a sufficiently small amount of resource.  This problem can be overcome in the many-copy setting where one can invoke the Generalized Quantum Stein's Lemma \cite{Brandao-2010a}, and this is essentially the high-level approach taken in Refs. \cite{Brandao-2008b,Brandao-2010b, brandao_reversible_2015} to obtain asymptotic reversibility of resource transformations.

\section{Examples}

In many resource theories there exists a maximally resourceful unit pure state $\varphi$, such as the Bell state $\ket{\varphi_e}:=\sqrt{1/2}(\ket{00}+\ket{11})$ for entanglement or the uniform superposition state $\ket{\varphi_c}:=\sqrt{1/2}(\ket{0}+\ket{1})$ for coherence.  The one-shot distillation rate of the resource is the optimal rate at which one can convert a single copy of a given state state into several copies of the maximally resourceful unit state under some error threshold. Equation \eqref{Eq:pure_state_converse} immediately recovers known results for the one-shot concentration rate in entanglement \cite{buscemi_general_2013} and coherence \cite{vijayan_one-shot_2018, regula_one-shot_2018} as we will show below.

Let us first recall that min-entropy of a state $\rho$ is defined as.
\begin{equation}
S_{\min}(\rho) = -\log(\lambda_{\max}(\rho)),
\end{equation}
where $\lambda_{max}(\rho)$ is the largest eigenvalue of $\rho$.  In the QRT of entanglement, $G_{\min}(\varphi^{AB})=S_{\min}(\tr_{A}\varphi^{AB})$ while in the QRT of coherence, $G_{\min}(\varphi^{A})=S_{\min}(\Delta(\varphi^{A}))$, where $\Delta$ is the completely dephasing map. To see this notice that for any bipartite pure state, 
\begin{equation}
\ket{\varphi}^{AB} = \sum\limits_i \sqrt{\lambda_i}U\ket{\lambda_i}^A \ket{\lambda_i}^B 
\end{equation}
the minimisation in equation \eqref{Eq:G_min_def} is achieved by the product state $\gamma = U \op{\lambda_{max}}{\lambda_{\max}} U^{\dagger}  \otimes \op{\lambda_{max}}{\lambda_{\max}} $, where $\ket{\lambda_{max}}$ is the eigenvector corresponding to the largest eigenvalue of $\tr_{A}\varphi^{AB}$. Similarly for coherence the minimisation is achieved by the largest eigenvector of $\Delta(\varphi)$. Let $\Gamma$ represent the partial trace operation or the completely dephasing map. It can be easily verified that $S_{min}(\Gamma(\varphi^m)) = mS_{min}(\Gamma(\varphi))$. Comparing with equation \eqref{Eq:G_min-extensive}, the quantity $c(\varphi) = S_{min}(\Gamma(\varphi)) = 1$ for these resource theories.  

We define the ideal rate of one-shot distillation of entanglement for an arbitrary pure state $\rho$ to many copies of the maximally entangled state $\varphi_e$ using $\delta$-entanglement-generating operations as $E^{\delta, \epsilon}(\rho, \varphi_e)$. Similarly the ideal rate of one-shot coherence distillation is defined to be $C^{\delta, \epsilon}(\rho, \varphi_c)$, where $\varphi_c$ is the maximally coherent state. 

\begin{corollary}
The one-shot pure state concentration rate for entanglement $E^{\delta, \epsilon}(\psi^{AB}, \varphi_e)$ using $\delta$-entanglement-generating operations and the one-shot concentration rate of coherence $C^{\delta, \epsilon}(\psi, \varphi_c)$ using $\delta$-coherence-generating operations are given by,
\begin{align}
&E^{\delta, \epsilon}(\psi^{AB}, \varphi_e) \leq \max_{\ov\psi^{AB}\in b_*( \psi^{AB} \notag , 2\epsilon)}S_{min}(\rho_{\ov\psi^{AB}}) \\ & \hspace{4.7cm}+ \log(1 + \delta),\\
&C^{\delta, \epsilon}(\psi, \varphi_c) \leq \max_{\ov\psi\in b_*( \psi, 2\epsilon)} S_{min}(\Delta(\psi)) + \log(1 + \delta),
\end{align}
respectively, where $\rho_{\ov\psi^{AB}} = \tr_A(\psi^{AB})$ is the reduced density matrix of $\ov\psi^{AB}$ and $\Delta(\psi) = \sum_i \op{i}{i} \psi \op{i}{i}$ is the completely dephased version of $\psi$ in the incoherent basis. 
\end{corollary}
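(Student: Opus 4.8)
The plan is to obtain both inequalities as immediate specializations of the pure-state converse \eqref{pure_state_result_eqn-pure} to the resource theories of entanglement and coherence. Here $E^{\delta,\epsilon}(\psi^{AB},\varphi_e)$ and $C^{\delta,\epsilon}(\psi,\varphi_c)$ are simply the distillation rate $\mc{D}^{\delta,\epsilon}(\rho,\varphi)$ with pure input $\rho=\psi$ and target $\varphi$ equal to the Bell state $\varphi_e$ or the maximally coherent state $\varphi_c$, so that \eqref{pure_state_result_eqn-pure} reads $\mc{D}^{\delta,\epsilon}(\psi,\varphi)\leq [G^{2\epsilon}_{\min,*}(\psi)+\log(1+\delta)]/c(\varphi)$. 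The whole argument then amounts to evaluating $c(\varphi)$ and unfolding $G^{2\epsilon}_{\min,*}$ in each theory.

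First I would confirm that Property \ref{Ax:extensive} applies with $c(\varphi_e)=c(\varphi_c)=1$. Writing $\Gamma$ for $\tr_A$ in the entanglement case and for the completely dephasing map $\Delta$ in the coherence case, the surrounding discussion already establishes $G_{\min}(\varphi^{AB})=S_{\min}(\tr_A\varphi^{AB})$, $G_{\min}(\varphi)=S_{\min}(\Delta(\varphi))$, and the multiplicativity $S_{\min}(\Gamma(\varphi^m))=m\,S_{\min}(\Gamma(\varphi))$, which is exactly the extensivity of \eqref{Eq:G_min-extensive} with $c(\varphi)=S_{\min}(\Gamma(\varphi))$. Since $\tr_A\varphi_e=\Delta(\varphi_c)=\mbb{I}/2$, both unit states give $S_{\min}=1$, so $c=1$ and Theorem \ref{Th:converse} applies.

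Next I would unfold the smoothing. By definition $G^{2\epsilon}_{\min,*}(\psi)=\max_{\ov\psi\in b_*(\psi,2\epsilon)}G_{\min}(\ov\psi)$, and every element of the pure-state ball is again a pure state, so the two identifications apply verbatim to each $\ov\psi$. Substituting $G_{\min}(\ov\psi^{AB})=S_{\min}(\tr_A\ov\psi^{AB})$ (entanglement) and $G_{\min}(\ov\psi)=S_{\min}(\Delta(\ov\psi))$ (coherence), and inserting $c=1$, turns $[G^{2\epsilon}_{\min,*}(\psi)+\log(1+\delta)]/c(\varphi)$ into precisely the two claimed right-hand sides, with $\rho_{\ov\psi^{AB}}=\tr_A\ov\psi^{AB}$ the reduced state of the smoothed vector.

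Since the two $G_{\min}=S_{\min}$ identities are already supplied, the remaining content is routine; the one place deserving care is the justification of those identities, i.e.\ that the overlap minimization \eqref{Eq:G_min_def} is attained by a product of largest-eigenvalue projectors for entanglement and by the top eigenvector of $\Delta(\psi)$ for coherence. For the Schmidt form $\ket\varphi=\sum_i\sqrt{\lambda_i}\,U\ket{\lambda_i}^A\ket{\lambda_i}^B$ the choice $\gamma=U\op{\lambda_{\max}}{\lambda_{\max}}U^\dagger\otimes\op{\lambda_{\max}}{\lambda_{\max}}$ gives $\tr(\varphi\gamma)=\lambda_{\max}$, and checking that no free (separable, resp.\ diagonal) $\gamma$ yields a larger overlap is the only genuinely substantive estimate; everything else is bookkeeping on the general bound.
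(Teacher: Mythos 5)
Your proposal is correct and matches the paper's own reasoning: the paper treats this corollary as an immediate specialization of the pure-state bound \eqref{Eq:pure_state_converse} in Theorem \ref{Th:converse}, using exactly the identifications $G_{\min}(\ov\psi^{AB})=S_{\min}(\tr_A\ov\psi^{AB})$, $G_{\min}(\ov\psi)=S_{\min}(\Delta(\ov\psi))$, and $c(\varphi_e)=c(\varphi_c)=1$ established in the surrounding discussion. You also correctly flag that the only substantive step is verifying that the overlap maximization over free states is attained by the product of largest-eigenvalue projectors (respectively the top eigenvector of $\Delta(\psi)$), which the paper likewise asserts with the same Schmidt-decomposition argument.
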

In the limit of $\delta = 0$ we recover previously known results regarding the one-shot concentration of coherence and entanglement in \cite{buscemi_general_2013, vijayan_one-shot_2018, regula_one-shot_2018}. Proposition \ref{Prop:Direct} implies that for the resource theory of entanglement, the upper-bound given in Theorem \ref{Th:converse} is tight for prefect transformations recovering the known result in \cite{buscemi_general_2013} as shown below.

\begin{table}[]
\centering
\caption{Value of $G_{min}(\psi)$ in different theories}
\label{my-label}
\begin{tabular}{ccc}
\cline{1-3}
\multicolumn{1}{|c|}{R.Theory} & \multicolumn{1}{c|}{Entanglemnet} & \multicolumn{1}{c|}{Coherence}    \\ \cline{1-3}
\multicolumn{1}{|c|}{$G_{min}(\psi)$} & \multicolumn{1}{c|}{$S_{min}(\rho_{\psi})$} & \multicolumn{1}{c|}{$S_{min}(\Delta(\psi))$}   \\  \cline{1-3}
\end{tabular}
\end{table}

\begin{corollary}
For the resource theory of entanglement the perfect transformation $\psi \rightarrow \varphi^{ m}_e$, where $\varphi_e$ is the unit maximally entangled state is achievable with a free operation $\Lambda \in \mc{O}$  with a rate
\begin{align}
E^{0,0}(\psi^{AB}, \varphi_e) = G_{min}(\psi^{AB}) = S_{min}(\rho_{\psi^{AB}}),
\end{align}
where $\rho_{\psi^{AB}} = \tr_{B}(\psi^{AB})$.\end{corollary}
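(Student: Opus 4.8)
The plan is to establish the claimed equality by squeezing $E^{0,0}(\psi^{AB},\varphi_e)$ between a converse upper bound that comes essentially for free from Theorem \ref{Th:converse} and an achievability lower bound obtained from an explicit LOCC protocol. Setting $\delta=\epsilon=0$ in the pure-state converse \eqref{pure_state_result_eqn-pure} and using that $b_*(\psi^{AB},0)=\{\psi^{AB}\}$ (fidelity between pure states equals $1$ only for identical states) gives $G^{0}_{\min,*}(\psi^{AB})=G_{\min}(\psi^{AB})$, while the computation recorded just above the corollary identifies $G_{\min}(\psi^{AB})=S_{\min}(\rho_{\psi^{AB}})$ and $c(\varphi_e)=1$. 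Since $\log(1+0)=0$, the converse collapses to $E^{0,0}(\psi^{AB},\varphi_e)\le S_{\min}(\rho_{\psi^{AB}})$, so the entire content of the corollary is the matching achievability.

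For achievability I would set $m:=S_{\min}(\rho_{\psi^{AB}})=-\log\lambda_{\max}(\rho_{\psi^{AB}})$, so that $\lambda_{\max}(\rho_{\psi^{AB}})=2^{-m}$, and work directly with Schmidt vectors rather than with the generic measure-and-prepare map of Proposition \ref{Prop:Direct}. Writing $\ket{\psi^{AB}}=\sum_i\sqrt{\lambda_i}\,\ket{i}^A\ket{i}^B$ in Schmidt form, the target $\varphi_e^{\otimes m}$ carries the flat Schmidt vector $(2^{-m},\dots,2^{-m})$ with $2^m$ equal entries. The key observation is that because every Schmidt coefficient of $\psi^{AB}$ satisfies $\lambda_i\le\lambda_{\max}=2^{-m}$, each partial sum obeys $\sum_{i=1}^{k}\lambda_i^{\downarrow}\le k\,2^{-m}$ for $k\le 2^m$ and is bounded by $1$ thereafter; hence the Schmidt vector of $\psi^{AB}$ is majorized by that of $\varphi_e^{\otimes m}$. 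By Nielsen's majorization theorem this relation is exactly the condition for an LOCC map $\Lambda\in\mc O$ to satisfy $\Lambda(\psi^{AB})=\varphi_e^{\otimes m}$ exactly, whence $F^2(\Lambda(\psi^{AB}),\varphi_e^{\otimes m})=1$ and $E^{0,0}(\psi^{AB},\varphi_e)\ge m=S_{\min}(\rho_{\psi^{AB}})$.

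Combining the two bounds yields $E^{0,0}(\psi^{AB},\varphi_e)=S_{\min}(\rho_{\psi^{AB}})=G_{\min}(\psi^{AB})$. The step I expect to be the genuine obstacle is the achievability, and specifically the point that one should \emph{not} route it through Proposition \ref{Prop:Direct}: the free robustness of the unit maximally entangled state is $\mc R^{0}(\varphi_e)=1$, so that proposition only certifies rates with $m\log 3\le G^{0}_{\min,*}(\psi^{AB})$, strictly weaker than $S_{\min}$. Tightness here is a special feature of pure-state entanglement and must be supplied by the majorization structure of LOCC, which is why I would invoke Nielsen's theorem for the lower bound. A minor caveat to flag is integrality: $E^{0,0}$ is integer valued by definition, so the displayed equality is to be read for $\psi^{AB}$ with $\lambda_{\max}(\rho_{\psi^{AB}})$ an integer power of $2$; otherwise the achievability argument delivers $\lfloor S_{\min}(\rho_{\psi^{AB}})\rfloor$.
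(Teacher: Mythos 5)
Your proof is correct, and its converse half is exactly the paper's: take $\delta=\epsilon=0$ in \eqref{pure_state_result_eqn-pure}, note $b_*(\psi^{AB},0)=\{\psi^{AB}\}$, and use $G_{\min}(\psi^{AB})=S_{\min}(\rho_{\psi^{AB}})$ with $c(\varphi_e)=1$. The achievability half is where you genuinely depart, and one claim there needs correcting. The paper does \emph{not} apply Proposition \ref{Prop:Direct} as a black box; it invokes the inner inequality \eqref{Eq:G-min-log-rob}: the measure-and-prepare map \eqref{Eq:map-measure-prepare} (with $\ov\psi=\psi$, so that $\Lambda(\psi)=\varphi_e^{m}$ exactly) is non-entangling provided $G_{\min}(\psi)\geq\mc{LR}^{0}(\varphi_e^{m})$, i.e.\ provided $G_{\min}(\psi)$ beats the \emph{multi-copy} free log-robustness rather than $m$ times the single-copy bound. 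Since for pure entangled states the free robustness equals the global robustness \cite{harrow_robustness_2003, steiner_generalized_2003}, and the rank-$2^m$ maximally entangled state has global robustness $2^m-1$, one gets $\mc{LR}^{0}(\varphi_e^{m})=m$, so the sufficient condition is $G_{\min}(\psi)\geq m$---tight, with no $\log 3$ loss. The $m\log 3$ you object to arises only from the pseudo-subadditivity estimate \eqref{Eq:pseudo-subadditive-log-robustness}, which the paper's corollary proof bypasses. Your alternative via Nielsen's majorization theorem (all Schmidt coefficients are at most $2^{-m}$, hence the Schmidt vector of $\psi^{AB}$ is majorized by the flat one) is also correct, and it buys something the paper's argument does not: a genuine LOCC protocol, so $\Lambda\in\mc{O}$ literally as the corollary states, whereas the paper's map is only guaranteed to lie in the larger class $\mc{O}^{0}$ of separability-preserving operations. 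The cost is importing an external theorem, while the paper's purpose here is precisely to show that its own general machinery reproduces the known result. Your integrality caveat ($E^{0,0}$ is an integer, so the equality holds as stated when $\lambda_{\max}$ is a power of two and as a floor otherwise) is fair and applies equally to the paper's version, which glosses over it.
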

\begin{proof}
From proposition \ref{Prop:Direct} we know that there exists a free-operation $\Lambda$ in the limit $\delta , \epsilon \rightarrow 0$ which performs the transformation $\psi \rightarrow \varphi_e^{ m}$ if,
\begin{align}
G_{min}(\psi) &\geq \madhav{m}\log (1 + \madhav{2}\mc{R}_f(\varphi_e)) \geq \madhav{\mc{LR}(\varphi_e^m)} \notag \\ & \madhav{ = \mc{LR}_g(\varphi^m_e)} = m,
\label{Eq:direct_ent_proof}
\end{align}
where we have used \madhav{equation \eqref{Eq:G-min-log-rob}} and the fact that the free \madhav{log}-robustness of entanglement is equal to the global \madhav{log-}robustness of entanglement $\mc{LR}_g(\rho)$ for pure states and for the maximally entangled state of rank $2^m$ \madhav{the global  robustness} is given by \madhav{$m$}  \cite{harrow_robustness_2003, steiner_generalized_2003}. Combining equations \eqref{Eq:direct_ent_proof} and \eqref{Eq:pure_state_converse} in the limit $\delta, \epsilon \rightarrow 0$ gives the desired result.
\end{proof}

\begin{remark}
For any dimension $d \geq 2$, the $\delta$-free robustness of coherence $\mc{R}_{f}^{\delta}(\varphi^m)$ is achieved by the maximally mixed state $\mbb{I}_d = \frac{1}{d}\sum\limits_{i} \op{i}{i}$, where $m = \log d$.\end{remark}

\begin{proof}
Let the optimal incoherent state achieving $\mc{R}_{f}^{\delta}(\varphi^m)$ be $\pi_{\varphi^m}$. 
The twirling operation $T$ for a state $\rho$ is defined as an empirical average over all possible incoherent basis permutations of $\rho$.
Notice that for the maximally coherent state  $T(\varphi^m) = \varphi^m$. From the definition of $\delta$-resource robustness we have,
\begin{align}\label{Eq:before_twirl}
\rho = \frac{\varphi^m + \mc{R}^{\delta}_f(\varphi^m) \pi_{\varphi^m}}{1 + \mc{R}^{\delta}_{f}(\varphi^m)} \in \mc{I}^{\delta},
\end{align}
where $\mc{I}^{\delta}$ is the set of $\delta$-incoherent states. Applying the twirling operation on both sides of equation \eqref{Eq:before_twirl} we have,
\begin{align}\label{Eq:twirl}
T(\rho) = \frac{\varphi^m + \mc{R}^{\delta}_f(\varphi^m) T(\pi_{\varphi^m})}{1 + \mc{R}^{\delta}_{f}(\varphi^m)} \in \mc{I}^{\delta} .
\end{align}
The last inclusion follows from the fact that coherence is invariant under the twirling operation. Equation \eqref{Eq:twirl} implies that if mixing $\mc{R}_{f}^{\delta}(\varphi^m)$ amount of $\pi_{\varphi^m}$ with $\varphi^m$ gives you a state in $\mc{I}^{\delta}$ then mixing $\mc{R}_{f}^{\delta}(\varphi^m)$ amount of $T(\pi_{\varphi^m})$ will also give you a state in $\mc{I}^{\delta}$. For any incoherent state $\gamma$, $T(\gamma)$ will be the completely mixed state $\frac{\mbb{I}_d}{d}$. We can see this by noticing that the state $T(\gamma)$ is permutation invariant by virtue  of the twirling operation and the only permutation invariant incoherent state is the maximally mixed state. 
\end{proof}

\section{Discussion}

As mentioned in the introduction, the general one-shot distillation problem has been studied in Ref. \cite{liu_one-shot_2019} in a more exhaustive manner.   However, the techniques we use are different, and a salient point of our work is the relative mathematical simplicity of our techniques and bounds. Currently our results are confined to pure state distillation and a future direction would be to see if our techniques can find the most general mixed state transformation bounds. 

Another open question from our work is whether our bounds reproduce the asymptotic results in \cite{brandao_reversible_2015} under the usual regularisation procedure. This requires further investigation of the asymptotic properties of $G_{min}(\rho)$ and $\mc{LR}^\delta(\rho)$. One technical challenge in this direction is that we are constrained to use the free robustness instead of the global robustness to ensure that our direct map is a free map.  To directly apply the Generalized Quantum Stein's Lemma of Ref. \cite{Brandao-2010a} for obtaining asymptotic results \cite{brandao_reversible_2015}, a connection needs to be made between the $\delta$-free robustness and the global robustness.  

It is also of interest to explore what the nature of trade off between error $\delta$ in the used operation and error $\epsilon$ in the final state is and if they have some operational interpretation. Clearly these quantities must be inversely related since increasing $\delta$ allows you to use a larger set of operations which can get you closer to the target state and thus reducing $\epsilon$. Operationally one can interpret a non-zero $\delta$ to represent the resource consumed to perform the given task. Seeing whether this allows us to define a new resource measure and more quantitative statements regarding specific QRTs are left to future work.

For our lower bounds we have assumed that the QRT must be convex, an improvement to these bounds would be to find a way to relax this constraint to include non-convex QRTs as well like we do in our upper bounds.

\bibliographystyle{apsrev4-1}

\end{document}